\documentclass[12pt]{amsart}
\usepackage{amsmath,amscd,amssymb,latexsym}
\usepackage{graphicx}
\newtheorem{theorem}{Theorem}
\newtheorem{lemma}{Lemma}

\def\a{{\alpha}}
\def\b{{\beta}}

\def\d{{\delta}}
\def\e{{\epsilon}}

\setlength{\oddsidemargin}{18pt} \setlength{\evensidemargin}{18pt}
\setlength{\textwidth}{435pt}
\begin{document}
\title{A 2-chain can interlock with an open $10$-chain}

\author{Bin Lu}
\address{Department of Mathematics \& Statistics\\California State University Sacramento\\6000 J Street\\ Sacramento, CA 95819}
\email{binlu@csus.edu}

\author{Joseph O'Rourke}
\address{Department of Computer Science\\ Smith College\\ Northampton, MA 01063}
\email{orourke@cs.smith.edu}

\author{Jianyuan K. Zhong}
\address{Department of Mathematics \& Statistics\\California State University Sacramento\\6000 J Street\\ Sacramento, CA 95819}
\email{kzhong@csus.edu}

\begin{abstract}
It is
an open problem, posed
in~\cite{SoCG},
to determine the minimal $k$ such that an open flexible $k$-chain can
interlock with a flexible $2$-chain. It was first established in~\cite{GLOSZ}
that there is an open $16$-chain in a trapezoid frame
that achieves interlocking. This was subsequently improved in~\cite{GLOZ} to establish interlocking between a 2-chain and
an open 11-chain. Here we improve that result once more,
establishing interlocking between a 2-chain and a 10-chain.
We present arguments that indicate that 10 is likely the minimum.
\end{abstract}
\maketitle

\section{Introduction}
An {\it open chain} is a linkage of
rigid bars (links, line segments, edges) connected at their joints (vertices, endpoints), which forms a simple, unclosed path.  A {\it folding} of a linkage is any  reconfiguration of the linkage obtained by moving the joints such that:
(1) The edges remain straight;
(2) The number and length of edges are preserved;
(3) The edges do not intersect or pass through one another.
When the joints of an open chain act as
universal joints, it is called a \emph{flexible open chain}. Often we drop the
prefixes ``flexible" and ``open" when understood from the context.
A chain with $k$ edges is called a {\it $k$-chain}. When emphasizing the edges in a chain, a $k$-chain is often referred as a $k$-link chain. A
collection of chains are {\it interlocked} if foldings cannot separate them.

Interlocking of open chains was studied in~\cite{DLOS,SoCG},
establishing a number of results regarding which collections of
chains can and cannot interlock.  An open problem posed in~\cite{SoCG} is: What is the
minimal number $k$ such that an open, flexible $k$-chain can
interlock with a flexible $2$-chain?  It was first established in~\cite{GLOSZ}
that there is an open $16$-chain in a trapezoid frame
that achieves interlocking. This was subsequently improved in~\cite{GLOZ} to establish interlocking between a 2-chain and
an open 11-chain. Here we improve that result once more,
establishing interlocking between a 2-chain and a 10-chain.
We present arguments that indicate that 10 is likely the minimum.

Here we summarize results from~\cite{DLOS} that we use in the sequel:
\begin{enumerate}
\item No collection of 2-chains can interlock.
\item Two open 3-chains cannot interlock, even with an additional collection of an arbitrary number of 2-chains.
\item A 2-chain cannot interlock with an open 4-chain. 
\item A flexible open 3-chain can interlock with a flexible open 4-chain.
\end{enumerate}

\noindent
A construction of a 3-chain interlocking a 4-chain, which we call a \emph{3/4-tangle}, is repeated below in Figure 1.

\begin{figure}[htbp]
\centering
\includegraphics[width=0.95\linewidth]{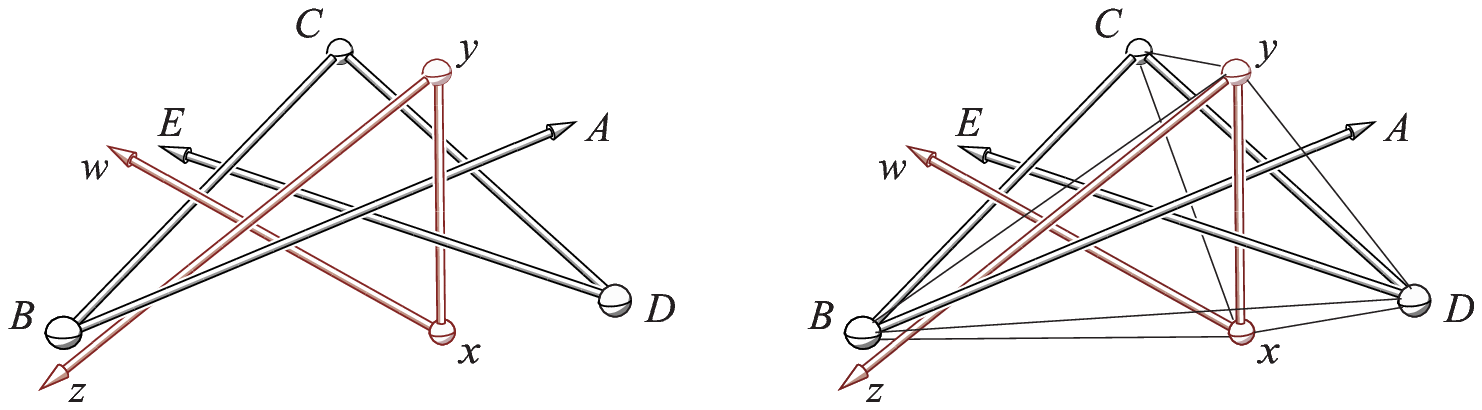}
\caption{Fig.~6 from \protect\cite{SoCG}.}
\label{link34}
\end{figure}
Some facts about the 3/4-tangle:
\begin{enumerate}
\item Theorem~11 of \cite{SoCG}: The convex hull
$CH(B,C,D,x,y)$ of the flexible
joints $B$, $C$, $D$, $x$, and $y$
does not change combinatorially as the chains are reconfigured.
\item Corollary 1 of \cite{GLOSZ}: Let $\e>0$ and $P$ be the midpoint
of $xy$, if $BC=CD=xy=\frac{1}{6}\e$, and end links
$AB=DE=xw=yz=\frac{1}{2}\e$, then all joints $B,C,D,x,y$ and endpoints $A, E, w, z$
stay inside the $\e$-ball centered at $P$.

\end{enumerate}

 This 3/4-tangle is crucial to the construction of the $11$-chain in \cite{GLOZ} and the construction of the $10$-chain here. In both constructions, the 11-chain and 10-chain are positioned to form a nearly rigid triangular frame which interlock the 2-chain. Note that there are other interlockings of a 3-chain with a 4-chain which could work, but
we follow \cite{GLOZ} in using the 3/4-tangle in Figure 1 to construct the 10-chain.

To make the 3/4 tangle part of a single open chain, there are four ways to connect the 3-chain with the 4-chain by adding additional link(s). In the discussion in~\cite{GLOZ} on ways to possibly further reduce the number of links in the 11-chain, two possibilities of allowing maximal sharing of the 3/4-tangle's 7 links were ruled out.
In this paper, we find that one of the two remaining possibilities can allow maximal sharing of two of the 3/4-tangle's 7 links in the triangle frame. This allows us to prove it indeed provides an open 10-chain interlocking with an open 2-chain.

The idea of proof and proof techniques are very similar to that in \cite{GLOZ},
and so we follow a similar proof structure.
The main difference in proof idea is that the triangle frame shares two of the 7 links of the 3/4-tangle. Therefore the triangle frame employs a total of $3+(5+1+1)=10$ links. We found it unnecessary to assume that the triangle frame is an isosceles triangle,
and so we alter Lemma 2 in \cite{GLOZ} accordingly.

\section{Idea of Proof}
The main idea of the proof, as in \cite{GLOZ}, is to build a
``rigid" triangular frame with small rings at its vertices
$(T_1, T_2, T_3)$, which could
interlock with a 2-chain, as shown in
Figure~\ref{Triangle.idea}(a) where $\triangle T_{1}T_{2}T_{3}$ is shaded. For then pulling away the vertex $v$ of the
2-chain from the triangle frame would diminish the
angle $\a$, and pushing $v$ up toward the triangle
would increase $\a$. But the only slack provided for $\a$ is that
determined by the diameter of the rings, as otherwise the triangular frame is rigid.

\begin{figure}[htbp]
\centering
\includegraphics[width=0.8\linewidth]{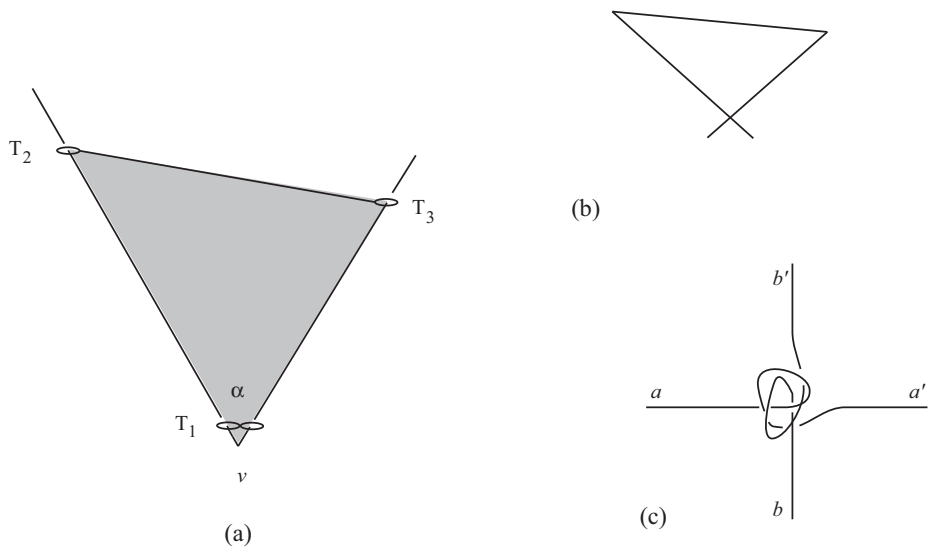}

\caption{(a) A rigid triangle with rings would interlock with a
2-chain; (b)~An open chain that simulates a rigid triangle;
(c)~Fixing a crossing of $aa'$ with $bb'$.}
\label{Triangle.idea}
\end{figure}


We can construct such a triangle using three links.  At each vertex, we take
one subchain $aa'$ and confine its crossing with another subchain
$bb'$ to within a small region of space. See
Figure~\ref{Triangle.idea}(c) for the idea. This pinning can be achieved by the ``$3/4$-tangle'' interlocking from~
\cite{SoCG} and ``jag loops.'' So the idea is to replace vertex $T_1$
with a small copy of the 3/4-tangle configuration. This can be
accomplished with $7$ links for a $3/4$-tangle, but
maximal sharing with both the incident incoming and outgoing triangle links reduces
the number of links needed.  We can achieve
confinement with $5$ links at the tangle near $T_1$.
At the other two
vertices of the triangle, this can be accomplished with
one extra link per vertex. Therefore, together with the $3$ links for the
main triangle skeleton, we use a total of $3+ (5+1+1) = 10$
links.

\section{A 2-chain can interlock an open 10-chain}
\subsection{A 2-chain can interlock an open 10-chain}
We take a 3/4-tangle whose joints
all stay within an $\e$-ball centered at
the midpoint of the middle link $xy$ of the 3-chain. 
Position the tangle
to represent the bottom vertex of a triangle, with the links arranged
as shown in Figure~\ref{10chain}. At each of the other two vertices, we add an extra link to make a \emph{jag loop}.

\begin{figure}[htbp]
\centering
\includegraphics[width=1.0\linewidth]{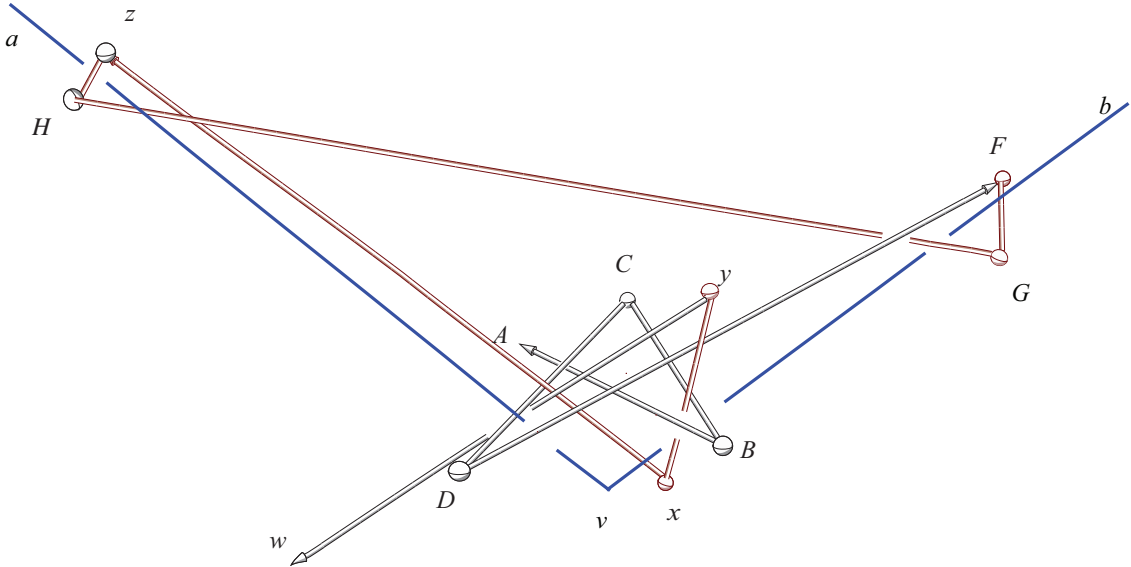}
\caption{An open 10-chain forming a nearly rigid triangle interlocking a 2-chain.}
\label{10chain}
\end{figure}

The configuration in Figure~\ref{10chain} is realizable,\footnote{
  We have constructed physical models of this configuration.}
that is,
there is more than enough flexibility in the design to ensure that
$va$ and $vb$ can indeed meet at the 2-chain vertex $v$. To see
that this is so, notice that the 3-chain and the 4-chain in the
3/4-tangle can be configured to lie in planes that are nearly orthogonal.
If we arrange the plane of the 2-chain at angles with respect to the two orthogonal planes within which the 3-chain and 4-chain lie,
then we can thread the 2-chain ($va$ following the link $xz$ and $vb$ following $DF$) as in Figure~\ref{10chain} to
pass through
one jag loop in the 4-chain and one loop made up by the end link $AB$ and the links of the 3-chain at vertex $x$,
at the same time as
weaving through the two jag loops at the base of the triangular
frame from above, as depicted in the figure.

\subsubsection{2-chain Through Triangle Jag Corners}
The jag loops at the other two corners can be assured to remain
in an $\e$-ball by making the extra link
lengths ($zH$ and $FG$) shorter than $\e$. We can conclude exactly as in \cite{GLOZ} that the link $va$ passes through an $\e$-ball at $z$ and cannot unlock with the jag. Note the link $vb$ passing through the jag loop at $F$ is a flip of the case $va$ passing through the jag loop at $z$. Thus we have that all corners of the triangle stay
within $\e$-balls, and both $va$ and $vb$ interlock the jag loops.
\begin{figure}[htbp]
\centering
\includegraphics[width=1.0\linewidth]{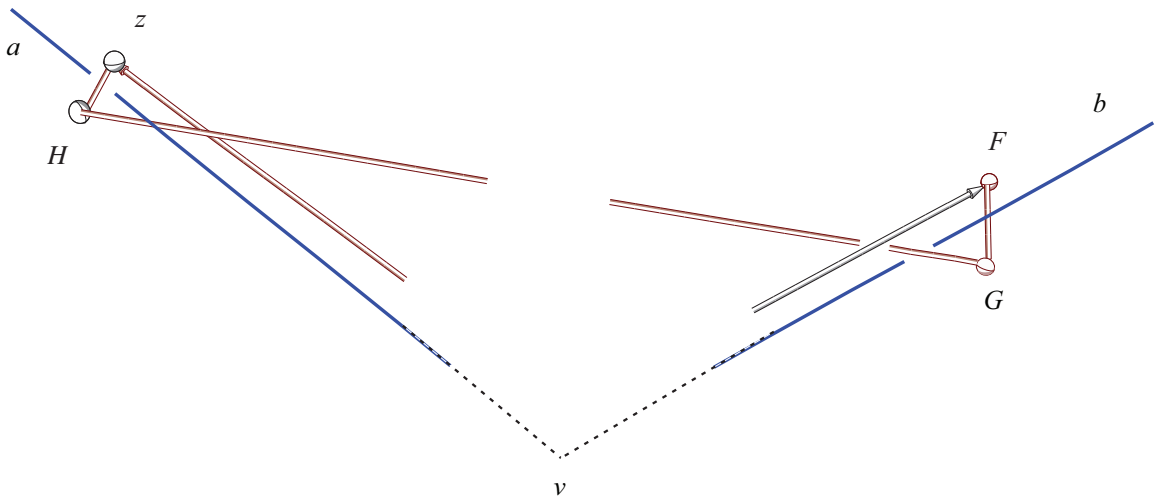}
\caption{ 1-link ``jags.''}
\label{1-link.jag}
\end{figure}

\subsubsection{The 2-chain Links Are Trapped by the 3/4-tangle }
We have exactly the same Lemma 1 as in \cite{GLOZ}, which we state again below. Our proof is slightly different due to the difference in construction.

\begin{lemma}  The vertex $v$ of the 2-chain cannot ``unweave''
from the 3/4-tangle.  Thus the links of the 2-chain are trapped by
the 3/4-tangle.
\label{Trapped.Lemma}
\end{lemma}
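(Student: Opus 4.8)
The plan is to argue by contradiction, exploiting the two invariants of the $3/4$-tangle recorded above: the combinatorial invariance of the convex hull $CH(B,C,D,x,y)$ (Theorem~11 of \cite{SoCG}), and the fact that, with the small link lengths chosen, every joint and endpoint of the tangle stays inside an $\e$-ball $B(P,\e)$ about the midpoint $P$ of $xy$ (Corollary~1 of \cite{GLOSZ}). Suppose some folding of the $10$-chain separated the $2$-chain from it. Along such a motion the two links $va$ and $vb$ meeting at $v$ would have to be withdrawn from the two apertures of the tangle through which they are threaded in Figure~\ref{10chain}: the link $va$ passes through the loop $L_{1}$ bounded by the end link $AB$ together with the links of the $3$-chain incident to $x$, and the link $vb$ passes through a jag loop $L_{2}$ of the $4$-chain.

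The first step is to pin down that $L_{1}$ and $L_{2}$ are small, combinatorially stable apertures. By Corollary~1 of \cite{GLOSZ} the links bounding $L_{1}$ and $L_{2}$ lie inside $B(P,\e)$, and by Theorem~11 of \cite{SoCG} their arrangement relative to $CH(B,C,D,x,y)$ does not change, so each loop admits a spanning surface that moves continuously and stays within $B(P,\e)$. Since links cannot pass through one another, the segment $va$ can stop meeting $L_{1}$ only if one of its endpoints, $v$ or $a$, passes through a spanning surface of $L_{1}$. But $a$ is confined to an $\e$-ball at the corner $z$ by the jag interlocking established in the preceding subsection, and that ball lies a fixed positive distance (roughly the skeleton edge length, which is $\gg \e$) from $B(P,\e)$; hence $a$ never reaches $L_{1}$, and it must be $v$ that passes through $L_{1}$. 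Symmetrically, $b$ is confined near the corner $F$, so it is again $v$ that must pass through $L_{2}$.

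The crux, and the step I expect to be the main obstacle, is to show that $v$ cannot be taken out through both apertures. I would record a pair of $\mathbb{Z}/2$ ``side bits'' registering on which side of the (continuously varying) spanning surfaces of $L_{1}$ and of $L_{2}$ the vertex $v$ lies; separating the $2$-chain forces both bits to flip. Because $L_{1}$ and $L_{2}$ are themselves the interlocked loops of the $3/4$-tangle --- this is precisely what makes the tangle an interlocking, and it is preserved under motion by Theorem~11 --- the two escape directions are geometrically incompatible: no continuous path of $v$ flips both bits while the initial stubs of $va$ and $vb$ emanating from $v$ avoid crossing the tangle's links. This is the one place where our argument differs from the corresponding step in \cite{GLOZ}: the link $xz$ bounding $L_{1}$ is now shared with the rigid triangle skeleton rather than being an interior tangle link, but since $xz$ still lies in $B(P,\e)$ near $v$ and the combinatorial picture of $CH(B,C,D,x,y)$ is unchanged, the same incompatibility holds. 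Hence no separating folding exists.

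Finally I would collect the pieces: $v$ is trapped in the bounded region swept by the tangle's $\e$-balls near $T_{1}$, while $a$ and $b$ are trapped near $T_{2}$ and $T_{3}$; since the two links of the $2$-chain join $v$ to $a$ and to $b$, the whole $2$-chain is trapped by the $3/4$-tangle, which is the assertion of Lemma~\ref{Trapped.Lemma}.
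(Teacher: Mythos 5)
Your overall strategy --- reduce escape of the 2-chain to the vertex $v$ passing through two apertures of the tangle, and then show the two passages are incompatible --- is a reasonable topological framing, but the proof has a genuine gap at exactly the step you yourself flag as ``the main obstacle.'' The claim that ``no continuous path of $v$ flips both bits while the initial stubs of $va$ and $vb$ emanating from $v$ avoid crossing the tangle's links'' \emph{is} the entire content of the lemma, and you give no argument for it: the fact that $L_1$ and $L_2$ are interlocked with one another does not by itself prevent a single point from crossing spanning surfaces of both, and Theorem~11's combinatorial invariance of $CH(B,C,D,x,y)$ constrains the tangle's own joints, not the trajectory of $v$. Without a concrete reason why the two ``side bits'' cannot both flip, the proof does not close. (A smaller inaccuracy: you say $a$ is confined to an $\e$-ball at the corner $z$; in the construction it is the \emph{link} $va$ that passes through that ball, while the endpoint $a$ is deliberately kept far from every corner by making the 2-chain links long. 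Your conclusion that $a$ never reaches $L_1$ still holds, but for that reason.)

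The paper's own argument is shorter and purely geometric, and it supplies precisely the obstruction your parity argument is missing. It shows that $v$ cannot penetrate the tetrahedron $T=CH(B,C,D,F)$: if $v$ were interior to $T$, then since $vb$ passes through the $\e$-ball at $F$ it meets $DF$ near a point $M$, and by convexity of $T$ the entire segment $vM$ (except $M$) would lie inside $T$; this contradicts the fact that $vb$ is woven under $BC$ and $AB$ and that the 2-chain $(a,v,b)$ straddles $BC$ and $AB$. That single convexity observation replaces the spanning-surface machinery and turns the asserted ``incompatibility'' into a checkable contradiction; if you want to keep your homological framing, you would need to prove an analogous concrete statement to make the argument rigorous.
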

\begin{proof}

From the construction of the $10$-chain, link $va$ pierces $\triangle DCF$,  and links $vb$ and $va$ straddle
$yw$, $BC$ and $AB$. We will prove that the vertex $v$ cannot penetrate the tetrahedron $T=CH(B,C,D,F)$, the convex hull with vertices $B,C,D,F$, which is highlighted in Figure~\ref{10chain1} below.

\begin{figure}[htbp]
\centering
\includegraphics[width=1.0\linewidth]{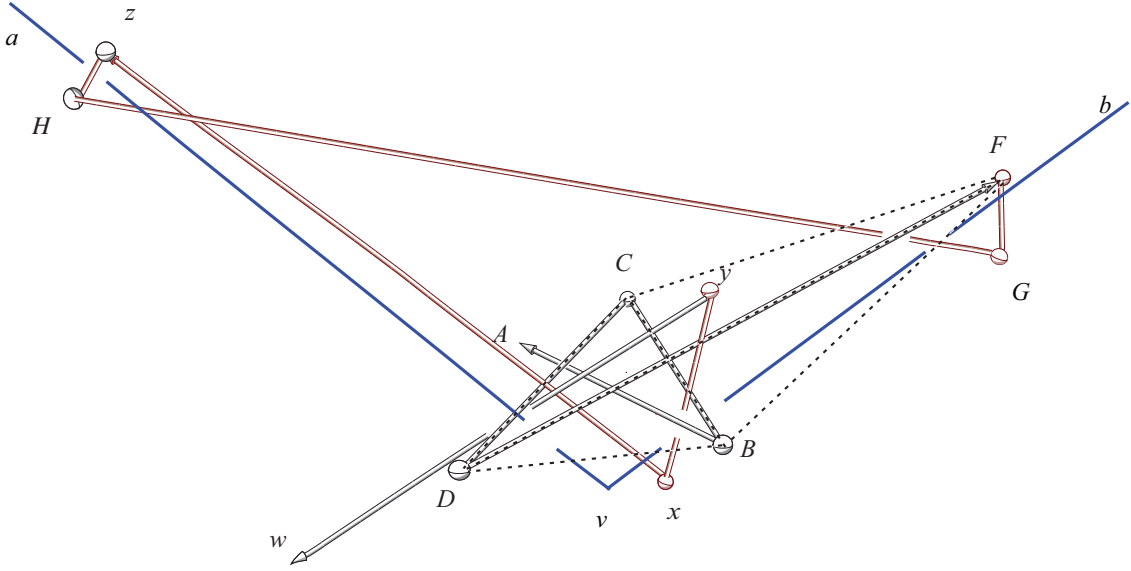}
\caption{Tetrahedron $T=CH(B,C,D,F)$ highlighted.}
\label{10chain1}
\end{figure}

Assume $v$ can penetrate $T$, that is, $v$ can move into the inside of the tetrahedron $T=CH(B,C,D,F)$. Note that the link $vb$ passes through the $\e$-ball at $F$. When $\e$ is sufficiently small, $vb$ and $DF$ come very close to $F$. We may assume $vb$ and $DF$ meet near a point $M$ on $DF$; then all points on $vM$ except $M$ stay inside $T$ as $T$ is convex. This contradicts the fact that $vb$ is weaved under $BC$ and $AB$ and the 2-chain $(a,v,b)$ straddles $BC$ and $AB$. So the vertex $v$ cannot penetrate the tetrahedron $T=CH(B,C,D,F)$.

\end{proof}
Therefore the $2$-chain links are trapped by the 3/4-tangle.
Therefore the only way the 2-chain could slide free of the near rigid triangular
frame is if one of the end vertices enters the $\e$-ball at the
jag loop corners.

\subsubsection{The Vertex $v$ Cannot Move Far}
Thus
the 2-chain $(a,v,b)$ cannot slide free of any of the triangle corners if we make the 2-chain links extra long, so that the two end points cannot enter the $\e$-ball containing the corners.
We recall Lemma 4 from \cite{GLOSZ}.

\noindent {\it When $\e$ is sufficiently small, a line piercing two disks of
radius $\e$ can angularly deviate from the line connecting the
disk centers at most $\d \le  \frac{2\e}{ m}$, where $m$ is the distance
between the disk centers.}

 Figure~\ref{Triangle.Lemma}(a) illustrates the largest angle $\d$.

\begin{figure}[htbp]
\centering
\includegraphics[width=0.8\linewidth]{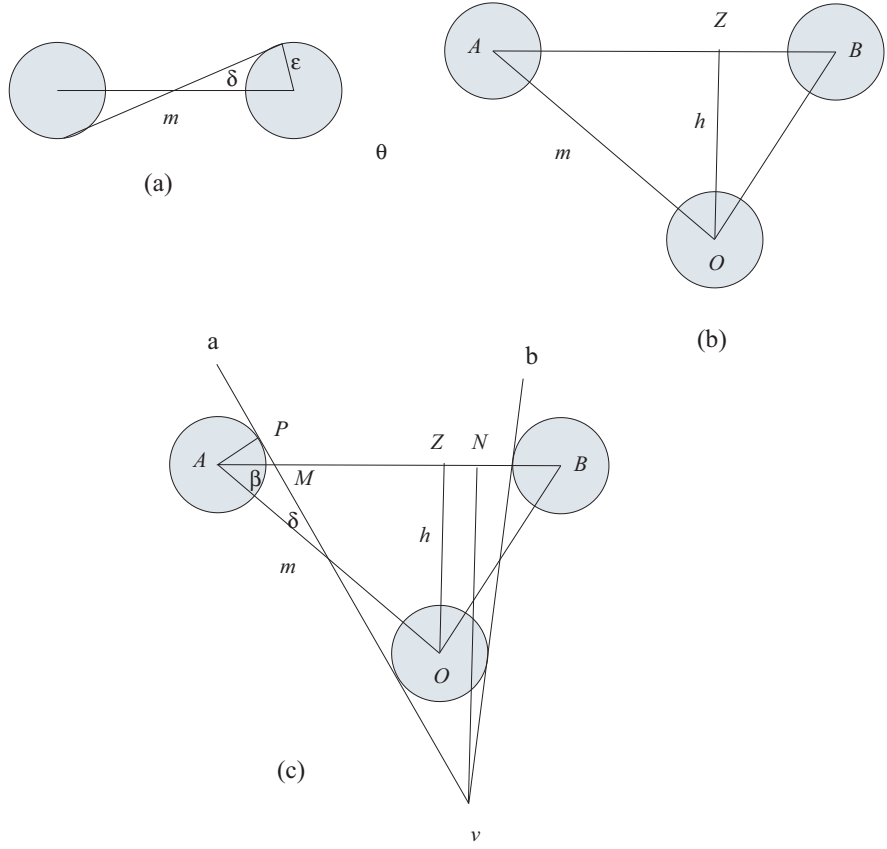}
\caption{Triangle Lemma: (a) Line through two disks deviates at
most $\d$; (b) Triangle structure; (c) The largest distance between $v$ and $AB$.}
\label{Triangle.Lemma}
\end{figure}

Let $O, A,B$ be the centers of the three $\e$-balls containing the triangle corners, see Figure~\ref{Triangle.Lemma}(b). Let the side length $|OA|=m$, and $\angle OAB=\b$, and the altitude $|OZ|=h$. Without loss of generality, we assume that $\b$ is an acute angle (otherwise $\angle OBA$ is acute, and use $\angle OBA$). Furthermore, when $\e$ is sufficiently small, $\b+ \frac{2\e}{m}$ can be assumed to remain an acute angle. Draw a perpendicular line from $A$ to line $va$ and  let $P$ be the point of intersection. Similarly, draw a perpendicular line from $v$ to line $AB$ and  let $N$ be the point of intersection. Let $M$ be the point of intersection of $AB$ and $va$.
The following lemma captures the key constraint on the motion of the 2-link chain.

\begin{lemma}
If the sides of the triangle pass through the $\e$-disks
illustrated, then the distance $|vN|$ satisfies $h-\e\leq |vN|\leq |AB|\tan(\b+\frac{2\e}{m})$. Hence $|vN|$ is bounded.
\end{lemma}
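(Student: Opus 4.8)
The plan is to combine two facts supplied by the construction. The first, and the crux, is that the $2$-chain vertex $v$ cannot leave the $\e$-ball centered at $O$: $v$ is the point that realizes the frame's corner at $O$, and Lemma~\ref{Trapped.Lemma} together with the requirement that $va$ and $vb$ still thread the $\e$-disks at $A$ and $B$ pins it there. One way to see this confinement is that $va$, which passes through the disk at $A$ and then through the disk at $O$, constrains $v$ to lie essentially on a ray emanating from $O$ away from $A$, while $vb$ likewise constrains $v$ to a ray emanating from $O$ away from $B$, and for a non-degenerate frame these two rays meet only near $O$. The second fact is the recalled Lemma~4 of~\cite{GLOSZ}: since the line carrying $va$ pierces the two $\e$-disks at $O$ and $A$, whose centers are distance $m$ apart, it deviates from line $OA$ by at most $\d=\frac{2\e}{m}$. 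I also use that the frame is built with both base angles acute, so that the altitude foot $Z$ lies strictly between $A$ and $B$.

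Granting these, the lower bound is immediate: $|vN|$ is the distance from $v$ to line $AB$, and since $|vO|\le\e$ while the distance from $O$ to line $AB$ is the altitude $h=|OZ|$, the triangle inequality gives $|vN|\ge h-\e$. For the upper bound, pass to the plane of $v$, $N$, $M$. By construction $vN\perp AB$, so triangle $vNM$ has a right angle at $N$, whence $|vN|=|NM|\tan(\angle vMN)$. Now $\angle vMN$ is the angle at which line $va$ crosses line $AB$; since line $va$ lies within $\d$ of line $OA$, and line $OA$ meets line $AB$ at angle $\b=\angle OAB$, we get $\b-\frac{2\e}{m}\le\angle vMN\le\b+\frac{2\e}{m}$, and we may assume the upper bound is still acute. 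Hence $|vN|\le|NM|\tan(\b+\frac{2\e}{m})$.

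It remains to check $|NM|\le|AB|$, and this is where the estimation in $\e$ enters. The point $M$ lies close to $A$: the $\e$-disk at $A$ gives $|AP|\le\e$, and since $AP\perp va$ while line $va$ crosses line $AB$ at angle at least $\b-\frac{2\e}{m}$, the right triangle $APM$ yields $|AM|\le\e/\sin(\b-\frac{2\e}{m})$. The point $N$ lies within $\e$ of $Z$, because orthogonal projection onto line $AB$ is $1$-Lipschitz and $|vO|\le\e$; and $|AZ|=m\cos\b$. Since $Z$ lies strictly inside $AB$ we have $m\cos\b<|AB|$, so $|NM|\le|AM|+|AZ|+|ZN|\le\e/\sin(\b-\frac{2\e}{m})+m\cos\b+\e<|AB|$ once $\e$ is small enough, which completes the bound $|vN|\le|AB|\tan(\b+\frac{2\e}{m})$. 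The genuinely delicate point in all of this is the confinement of $v$ to the $\e$-ball at $O$: if $v$ could slide far along line $va$, then $N$ would run far along line $AB$ and no finite upper bound would survive, so that step must invoke the near-rigidity of the whole $10$-chain frame established in the preceding subsections. Everything after it is routine, requiring only that ``$\e$ sufficiently small'' keeps $\b+\frac{2\e}{m}$ acute and keeps $\e/\sin(\b-\frac{2\e}{m})+m\cos\b+\e<|AB|$.
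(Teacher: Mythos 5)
Your proof is correct and follows essentially the same route as the paper's: the lower bound comes from the links passing through the $\e$-ball at $O$, and the upper bound from the right triangle $vNM$ together with the angular-deviation bound $\d\le\frac{2\e}{m}$ applied to line $va$ versus line $OA$. The one step where you do more than the paper is the estimate $|NM|\le|AB|$: the paper simply writes $|MN|=(|AN|-|AM|)\le|AN|\le|AB|$, tacitly assuming that $M$ lies between $A$ and $N$ and that $N$ falls inside the segment $AB$, whereas you bound $|NM|\le|AM|+|AZ|+|ZN|$ explicitly, using the foot $P$ (which the paper defines but never actually uses) to control $|AM|$ and the confinement of $v$ near $O$ to control $|ZN|$; that added care closes a small gap in the published argument. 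The only caveat is your premise $|vO|\le\e$, which is slightly stronger than what the threading directly gives (the intersection of the two perturbed lines lies within a constant multiple of $\e$ of $O$, the constant depending on the apex angle); this affects only constants in the upper bound, and for the lower bound the cleaner route is the ordering observation that $v$ precedes the ball at $O$ along $va$, so already $d(v,AB)\ge h-\e$ without locating $v$ itself.
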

\begin{proof}
Since $va$ and $vb$ both pass through the $\e$-ball centered at $O$, the distance from $v$ to $AB$ is at least $|OZ|-\e=h-\e$. On the other hand,
Figure~\ref{Triangle.Lemma}(c) illustrates the largest distance $|vN|$ between $v$ and $AB$. Note  $\triangle vNM$ is a right triangle, $\angle vMN=\angle OAB+\d=\b+\d$, and $\tan(\b+\d)=\frac{|vN|}{|MN|}$, so
$|vN|=|MN|\tan(\b+\d)=(|AN|-|AM|)\tan(\b+\d)\leq |AN|\tan(\b+\d)\leq |AN|\tan(\b+\frac{2\e}{m})\leq |AB|\tan(\b+\frac{2\e}{m})$ since the tangent function is increasing on the interval $(0, \frac{\pi}{2})$ and $|AN|\leq|AB|$.

\end{proof}
From the proof above, $h-\e\leq |vN|\leq |AN|\tan(\b+\frac{2\e}{m})$.
Since $\lim_{\e\to 0}h-\e=h$ and $\lim_{\e\to 0}|AN|\tan(\b+\frac{2\e}{m})=|AN|\tan(\b)=h$, we obtain $\lim_{\e\to 0}|vN|=h$.

\subsubsection{Main Theorem}
We connect 3D to 2D via the plane determined by the 2-link chain in the proof
of the main theorem below.

\begin{theorem}
The 2-link chain is interlocked with the 10-link triangle chain.
\label{main.theorem1}
\end{theorem}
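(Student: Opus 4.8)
The plan is to argue by contradiction, assembling the pieces already in place in the same way the corresponding theorem of \cite{GLOZ} is assembled: suppose some folding carries the $2$-chain $(a,v,b)$ to a configuration lying in a ball disjoint from a ball containing the $10$-chain, with no link of one chain ever crossing a link of the other along the way, and derive a contradiction. First I would record that throughout the folding each corner of the triangular frame stays in a fixed $\e$-ball --- the $3/4$-tangle corner $O$ by Corollary~1 of \cite{GLOSZ}, and the two jag corners $A$ and $B$ by the choices $|zH|<\e$ and $|FG|<\e$ made above --- and that Theorem~11 of \cite{SoCG} freezes the combinatorics of the tangle's convex hull, so the piercings invoked below persist. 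Each side link of the frame then pierces a fixed pair of $\e$-disks, so Lemma~4 of \cite{GLOSZ} bounds the frame's deviation from a fixed triangle $\triangle OAB$ by angles $O(\e)$. Finally, since $va$ and $vb$ each pierce the $\e$-ball at $O$ together with an $\e$-ball at a jag corner, all three corners lie within $\e$ of the plane $\Pi$ spanned by $(a,v,b)$; projecting onto $\Pi$ renders the configuration planar up to an $O(\e)$ error, which is the passage from $3$D to $2$D under which the planar lemmas apply.

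Next I would invoke Lemma~\ref{Trapped.Lemma}: $v$ can never unweave from the $3/4$-tangle, because that would force $v$ inside the tetrahedron $T=CH(B,C,D,F)$, contradicting that $(a,v,b)$ straddles $BC$ and $AB$. Moreover $va$ cannot unlock from the jag loop at $z$, nor --- by the mirror image of that case --- $vb$ from the jag loop at $F$, exactly as in \cite{GLOZ}. Since that jag-loop argument presumes the relevant endpoint stays outside the $\e$-balls, the only route to separation that remains open is for $a$ or $b$ to enter one of the three $\e$-balls, and this is what must be excluded.

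To exclude it, I would work inside $\Pi$, where the preceding lemma gives $h-\e\le |vN|\le |AB|\tan(\b+\tfrac{2\e}{m})$, so $v$ stays within a fixed distance of line $AB$; and since $v$ lies on the link $va$, which meets line $AB$ within $O(\e)$ of $A$ at an angle bounded away from $0$, the vertex $v$ is confined to a bounded region around $\triangle OAB$ (and, by the remark following that lemma, its distance from $AB$ tends to $h$ as $\e\to 0$). I would then fix the $2$-chain link lengths $|va|$ and $|vb|$ to exceed the diameter of that region plus $\mathrm{diam}(\triangle OAB)$ plus $\e$. Then $a$ and $b$, at fixed distances $|va|$ and $|vb|$ from $v$, lie farther than $\e$ from every one of the three $\e$-balls, contradicting the previous paragraph; hence no separating folding exists and the $2$-chain is interlocked with the $10$-chain.

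The step I expect to be the real obstacle is the interplay of the last two: it is a bootstrap, since the jag-loop interlocking is what keeps $va$ and $vb$ pinned near the frame's sides, which is what confines $v$, which is what keeps $a$ and $b$ out of the $\e$-balls --- which was exactly the hypothesis the jag-loop argument needed. The clean way to close this loop is a ``first failure time'' argument along the folding: at the earliest instant at which any one of these conclusions would break, continuity furnishes that all of them held just beforehand, and together they forbid the break. A secondary point --- and the reason the $|vN|$ bound had to be restated here --- is that the frame is no longer assumed isosceles, so one must carry the angle estimate through with a general acute $\b=\angle OAB$ (switching to $\angle OBA$ if that is the acute one) and check that $\b+\tfrac{2\e}{m}$ remains acute for small $\e$, which is exactly what the hypotheses of the preceding lemma secure.
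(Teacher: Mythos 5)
Your proposal follows essentially the same route as the paper's proof: pass to the plane of the $2$-chain, use the $\e$-ball confinement of the three corners and the $|vN|$ bounds to pin down $v$, invoke Lemma~\ref{Trapped.Lemma} and the jag-loop arguments to reduce escape to an endpoint entering an $\e$-ball, and then choose $\e$ small and $|va|,|vb|$ long to rule that out. Your version is in fact more explicit than the paper's (which compresses several steps into ``similar to \cite{GLOSZ}''), and your observation about the bootstrap between the jag-loop hypothesis and the endpoint confinement, resolved by a first-failure-time argument, addresses a circularity the paper leaves implicit.
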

\begin{proof}
Let $H$ be the plane containing the 2-link chain. The
links of the 2-chain pass through $\e$-balls around the three
vertices of the triangle.

$H$ meets these balls in disks each of
radius $\le \e$. The lemma above shows that the distance $|vN|$ from the vertex $v$ to $AB$ is bounded above and below. Thus, by choosing
$\e$ small enough, we limit to any desired amount the distance vertex $v$ of the
2-link chain can be pushed toward or separated from the triangle.

Similar to \cite{GLOSZ}, we can establish that the 2-chain links are interlocked with the 3/4-tangle and jag loops through which they pass, under the assumption that the triangle is nearly rigid.

Thus, when we  choose $\e$ small enough (and $va$, $vb$ long enough) to prevent the
two end vertices of the 2-link chain from entering the $\e$-balls, it ensures
that the 2-link chain is interlocked with the triangle chain.
\end{proof}

Below we sketch an argument that suggests, without formally establishing, that 10 links is the minimum
needed for interlocking.
Our main theorem above proves that the minimum $k\leq 10$. We argue that the minimum $k$ cannot be less than 10 by contradiction. Assume there is a flexible $n$-chain in a nearly rigid triangular frame that can interlock with a flexible 2-chain and $n\leq 9$. Furthermore, assume that at two vertices of the triangle frame, jag loops are used to allow maximal sharing of links to result in a minimum number of links. Now take away the two extra links at the jag loops and the side of the triangle connecting the jag loops, we consider the remaining links in the $n$-chain, which consists of two open chains with a total of links $n-3\leq 6$. In the following we argue that two open chains with a total of $n-3\leq 6$ links either cannot occur, or cannot form a loop to trap the 2-chain at the vertex $v$. The cases of two open chains with a total of links $6 = n-3$ are:
\begin{enumerate}
\item A 1-chain and an open 5-chain;
\item A 2-chain and a 4-chain; and
\item Two 3-chains.
\end{enumerate}
Case (1) cannot happen. For, if it does occur, then the 1-chain would be used as both the end link of the open 9-chain and as a side of the triangular frame, which allows the frame some flexibility and renders it too flexible. In Cases (2) and (3), the chains cannot interlock, so they cannot interlock the 2-chain.

The cases of $n-3<6$ may be argued analogously to the cases of $n-3=6$.

The reason this sketch is not a formal proof is that we are assuming that, in order to reduce the number of links used, interlocking must be achieved
by ``rigidifying" the 2-chain to a nearly rigid triangle.
A more direct contradiction from the assumption of locking with a 9-chain would be desirable.

\end{document}